\documentclass[letterpaper, 10pt, conference]{ieeeconf}

\pdfminorversion=4 

%

%
%
\usepackage{graphicx}
\usepackage{subfigure}
\usepackage{stfloats}
\usepackage{multirow}

\usepackage{blindtext}
\usepackage{tasks}
\usepackage{graphicx}
\usepackage{mathrsfs}
\usepackage{amsmath,amssymb,mathrsfs,amsfonts,dsfont} 
\usepackage{breqn}
\usepackage{mathtools}
\usepackage{caption}

\setlength{\parindent}{0mm}

\mathtoolsset{showonlyrefs=true}
\newtheorem{theorem}{Theorem}[section]
\newtheorem{lemma}[theorem]{Lemma}

\newtheorem{proposition}[theorem]{Proposition}

\newtheorem{definition}[theorem]{Definition}

\newtheorem{remark}[theorem]{Remark}




\newcommand{\Real}{\mathbb{R}}
\newcommand{\Natural}{\mathbb{N}}

\newcommand{\norm}[1]{\lVert#1\rVert}

\mathtoolsset{showonlyrefs=true}

\IEEEoverridecommandlockouts

\title{Compositional synthesis of almost maximally permissible safety controllers 
	\thanks{This work was supported in part by the H2020 ERC
Starting Grant AutoCPS and China Scholarship Council.
}}

\author{Siyuan Liu and Majid Zamani	
	\thanks{S. Liu is
		with the Department of Electrical and Computer Engineering, Technical University
		of Munich, Germany. M. Zamani is with the Computer Science Department, University of Colorado Boulder, USA. M. Zamani is with the Computer Science Department, Ludwig Maximilian University of Munich, Germany. Email:
		\texttt{sy.liu@tum.de}, {\tt\small majid.zamani@colorado.edu}.}
}

\begin{document}
	\maketitle
	\begin{abstract}
		In this work, we present a compositional safety controller synthesis approach for the class of discrete-time linear control systems. Here, we leverage a state-of-the-art result on the computation of robust controlled invariant sets. To tackle the complexity of controller synthesis over complex interconnected systems, this paper introduces a decentralized controller synthesis scheme. Rather than treating the interconnected system as a whole, we first design local safety controllers for each subsystem separately to enforce local safety properties, with polytopic state and input constraints as well as bounded disturbance set. Then, by composing the local controllers, the interconnected system is guaranteed to satisfy the overall safety specification. Finally, we provide a vehicular platooning example to illustrate the effectiveness of the proposed approach by solving the overall safety controller synthesis problem by computing less complex local safety controllers for subsystems and then composing them. 
	\end{abstract}
	
	\section{Introduction}
	Nowadays, there is a growing need for the controller synthesis of complex large-scale interconnected systems, e.g. autonomous vehicular control, biological networks and airplane formation flight. Treating the interconnected system in a monolithic manner is impractical owing to its inherent complexity, especially when the number of subsystems is large. Instead, compositional approaches have been developed in recent years to overcome this challenge by the so-called ``divide-and-conquer'' strategy \cite{keating2011simple}. In particular, subsystems are analyzed and tackled separately and the correctness can be ensured by the well-known ``assume-guarantee'' reasoning scheme \cite{henzinger1998you}.  
	
	Recently, there has been some research on compositionality based on the construction of (in)finite abstractions of the original control systems \cite{kim2018constructing,meyer2018compositional,tazaki2008bisimilar,zamani2017compositional,pola2016symbolic,rungger2016compositional,swikir2018dissipativity}. The abstraction is served as a substitution of the original system while designing the controller. The results presented in \cite{tazaki2008bisimilar,pola2016symbolic,rungger2016compositional} leverage the small-gain type conditions to facilitate the compositional construction of abstractions, which may not hold as the number of subsystems increases. The results proposed in \cite{zamani2017compositional,swikir2018dissipativity} take advantage of dissipativity approaches to break the requirements on the number of subsystems, whereas they restrict the interconnection topology to satisfy some graph equitability condition. The results in \cite{kim2018constructing,tazaki2008bisimilar,zamani2017compositional,pola2016symbolic,rungger2016compositional,swikir2018dissipativity} mainly deal with the compositional construction of (in)finite abstractions, whereas the ones in \cite{meyer2018compositional} additionally investigate compositional safety controller synthesis as well.  
	
	Compositional synthesis schemes without the construction of (in)finite abstractions are also presented in \cite{sadraddini2018distributed,nilsson2016synthesis}. The work in \cite{sadraddini2018distributed} studies optimal control policies  and optimal communication graphs based on distributed robust set-invariance. A synthesis of separable controlled invariant sets for physically coupled linear subsystems are developed in \cite{nilsson2016synthesis}. This result employs slack variable identities for relaxation and provides sufficient conditions based on optimization over linear matrix inequality (LMI). However, this method requires the state set to be symmetric zonotopes. Both approaches rely on some relaxed optimization problem while computing robust invariant sets, which can be conservative in terms of finding the maximally permissible safety controllers.
	
	In this work, we introduce a novel approach on decentralized synthesis of safety controllers for linear  interconnected systems by leveraging the recent result developed in \cite{rungger2017computing}. The proposed scheme is based on the so-called outer and inner invariant approximation of the maximal robust controlled invariant (RCI) set, respectively. The outer invariant approximation scheme is shown to be $\delta$-complete \cite{rungger2017computing,gao2012delta} in the sense that the algorithm either returns an empty set when the maximal invariant set is empty, or we obtain a $\delta$-relaxed RCI set (cf. Definition \ref{outercontroller}) of interest. We show that given synthesized local safety controllers, which are computed separately to enforce subsystems to satisfy local safety properties, the composed controller serves as a safety controller for the overall interconnected system. In addition, the maximality of the composed controller can be obtained using the maximality of local controllers. Similarly, the overall safety controller preserves the $\delta$-completeness property given the $\delta$-completeness property of the local safety controllers. Finally, the effectiveness of the proposed results is illustrated on a vehicular platoon example. 
	
	\section{Notation And Preliminaries} \label{section2}
	\subsection{Notation}
	
	We use $\Natural$ and $\Real$ to denote the sets of natural and real numbers, respectively. The symbols are annotated with subscripts to restrict the sets in the usual way, e.g. $\Real_{>0}$ denotes the positive real numbers. The symbol $\Real^{n\times m}$, with $n,m \in \Natural_{\geq 1}$ is used to denote the vector space of real matrices with $n$ rows and $m$ columns. We use $I_n$ and $0_{n \times m}$ to denote the identity matrix and zero matrix in $\Real^{n\times n}$ and $\Real^{n\times m}$, respectively. For $a,b \in \Real$ with $a \leq b$, the closed, open, and half-open intervals in $\Real$ are denoted by $[a,b], ]a,b[, [a,b[$, and $]a,b]$, respectively. For $a, b \in \Natural$ and $a \leq b$, the symbols $[a;b], ]a;b[, [a;b[$, and $]a;b]$ denote the corresponding intervals in $\Natural$. Given sets $X$ and $Y$, we denote by $f: X \rightarrow Y$ an ordinary map of $X$ into $Y$, whereas $f: X \rightrightarrows Y$ denotes a set-valued map.
	Given $N \in \Natural_{\geq 1}$, vectors $x_i \in \Real^{n_i}$, with $n_i \in \Natural_{\geq 1}$, $n=\sum^N_{i=1} n_i$ and $i \in [1;N]$, $x = [x_1;\ldots;x_N]$ is used to denote the concatenated vector in $\Real^n$. We use $\pi_i(x) = \Real^n \rightarrow \Real^{n_i}$ to denote the projection of vector $x$ over components $x_i$.
	We denote by $\norm{x}_2$ and $\norm{x}$, respectively, the Euclidean norm and the infinity norm of the vector $x \in \Real^n$. 	
	Given a function $f : \Real^n \rightarrow \Real^m$ and $\bar{x} \in \Real^m$, we use $f \equiv \bar{x}$ to denote that $f(x) = \bar{x}$ for all $x \in \Real^n$.	
	Given sets $X_i$, $i \in [1;N]$, the Cartesian product $X_1 \times \dots \times X_N$ is denoted by $\prod\limits_{i=1}^N X_i$. Similarly, for set $X \subseteq \Real^n$, we denote by $\pi_i(X) = 2^{\Real^n} \rightarrow 2^{\Real^{n_i}}$ the projection of $X$ over the $i$-th component.
	Given functions $f_i : X_i \rightarrow Y_i$, $i \in [1;N]$, the product function $\prod\limits_{i=1}^N f_i : \prod\limits_{i=1}^N X_i \rightarrow \prod\limits_{i=1}^N Y_i$ is defined as $\prod\limits_{i=1}^N f_i (x_1, \dots,x_N) = [f_1(x_1);\dots;f_N(x_N)]$. Notation ${\mathbb{B}}$ denotes the closed unit ball in ${\mathbb{R}^n}$ w.r.t the infinity norm $\norm{\cdot}$. Given sets $X$, $Y$ with $X \subset Y$, $Y \setminus X$ denotes the complement of $X$ with respect to $Y$, defined by $Y \setminus X = \{x : x \in Y, x \notin X \}$. The Minkowski sum for two sets $P,Q \subseteq \Real^n$ is defined by $X + Y = \{x \in \Real^n| \exists_{p \in P, q \in Q}, x = p + q\}$. Throughout the paper, for a given vector $x \in \Real^n$ and a set $W$, we slightly abuse the notation and use $x + W$ instead of $\{x\} + W$ to denote the Minkowski sum. The Pontryagin set difference is defined by $X - Y = \{x \in X |  x + Y \subseteq X\}$.

	\subsection{Interconnected Control Systems}
	First, we define control subsystems studied in this paper.
	\begin{definition}  
		A control system $\Sigma $ is a tuple
		\begin{align}
			\label{eq:controlsystem}
			\Sigma = (\Real^n,\Real^m,\Real^p,\mathcal{U},\mathcal{Z},W,f,\Real^q,h),
		\end{align}
		where $\Real^n$, $\Real^m$, $\Real^p$, $\Real^q$, are the state set, external input set, internal input set and output set, respectively. Sets $\mathcal{U}$ and $\mathcal{Z}$, respectively, are used to denote the subsets of the set of all measurable functions of time from $\Natural \rightarrow \Real^m$ and $\Natural \rightarrow \Real^p$. Set $W \subseteq \Real^n$ is an additive set of disturbances. Function $ f: \Real^n \times \Real^m \times \Real^p \times W \rightarrow \Real^n$ is the state transition function as the following: $x(t+1) = f(x(t),u(t),z(t),w(t))$, and $h : \Real^n \rightarrow \Real^q $ is the output function.	
	\end{definition}  	
	
	Now, we provide a formal definition of interconnected control systems based on the one presented in \cite{tazaki2008bisimilar}.
	We consider $N \in \Natural_{\geq 1}$ control subsystems 
	\begin{align}
		\label{eq:subcontrolsystem}
		\Sigma_i = (\Real^{n_i},\Real^{m_i},\Real^{p_i},\mathcal{U}_i,\mathcal{Z}_i,W_i,f_i,\Real^{q_i},h_i),
	\end{align}
	where $i \in [1;N]$, inputs and outputs are partitioned as
	\begin{IEEEeqnarray}{c}	
		\label{internalinput}	
		z_i = [z_{i1};\dots;z_{i(i-1)};z_{i(i+1)};\dots;z_{iN}],\\
		\label{output}	
		y_i = [y_{i1};\dots;y_{iN}],
	\end{IEEEeqnarray}
	with $z_{ij} \in \Real^{p_{ij}} $, $y_{ij} =h_{ij}(x_i) \in \Real^{q_{ij}} $ and output function
	\begin{align}
		\label{outputfunction}
		h_i(x_i) = [h_{i1}(x_i);\dots;h_{iN}(x_i)].
	\end{align}
	The outputs $y_{ii}$ are considered as external ones, whereas $y_{ij}$ with $i \neq j$ are interpreted as internal ones which are used to construct interconnections between subsystems. The dimension of $z_{ij}$ is assumed to be equal to that of $y_{ji}$. In the case that no connection exists between subsystems $\Sigma_i$ and $\Sigma_j$, we simply have $h_{ij} \equiv 0$. The interconnected control system is defined as the following.
	\begin{definition}
		\label{interconnectedsystem} 
		We consider $N \in \Natural_{\geq 1}$ control subsystems $\Sigma_i = (\Real^{n_i},\Real^{m_i},\Real^{p_i},\mathcal{U}_i,\mathcal{Z}_i,W_i,f_i,\Real^{q_i},h_i)$ as described in \eqref{internalinput}-\eqref{outputfunction}. The interconnected control system denoted by $\mathcal{I}(\Sigma_1,\dots,\Sigma_N)$ is a tuple
		\begin{align}
			\label{interConnectedsys}
			\Sigma = (\Real^n,\Real^m,\mathcal{U},W,f,\Real^q,h),
		\end{align} 
		where $n = \sum_{i=1}^{N}n_i$, $m = \sum_{i=1}^{N}m_i$, $q = \sum_{i=1}^{N}q_{ii}$, with disturbance set $W = \prod\limits_{i=1}^N W_i$, state transition function and output function  
	\begin{align}\notag
			f(x,u,w)&=[f_1(x_1,u_1,z_1,w_1);\dots;f_N(x_N,u_N,z_N,w_N)],\\ \notag
			h(x) &= [(h_{11}(x_1);\dots;h_{NN}(x_N))],
	\end{align} 
where $u = [u_1;\dots;u_N]$, $x = [x_1;\dots;x_N]$, $w = [w_1;\dots;w_N]$ and the interconnection variables are constrained by $z_{ij} = y_{ji}$ and $Y_{ji} \subseteq Z_{ij}$, $\forall i,j \in [1;N], i\neq j$.
	\end{definition}

	\subsection{Safety Controller}
	In this subsection, we define the notion of safety controller which will be used throughout the paper.
	Suppose the state constraint for each subsystem $\Sigma_i$ is given by the compact safe set $X_i \subseteq \Real^{n_i}$ and admissible input set is denoted by $U_i \subseteq \Real^{m_i} $. Then for the interconnected system $\Sigma$, the safe set $X$ and input constraint set $U$ have the following structure 
\vspace{-6mm}\begin{IEEEeqnarray}{c} \vspace{-1.5mm}
		\label{eq:safeset}
		X = \prod\limits_{i=1}^N X_i, {\kern 1pt}{\kern 1pt} {\mathop{\rm with}} {\kern 1pt}{\kern 1pt} X_i \subseteq \Real^{n_i}, \sum_{i=1}^{N}n_i = n,\\\vspace{-1.5mm}
		\label{eq:inputset}
		U = \prod\limits_{i=1}^N U_i, {\kern 1pt}{\kern 1pt}{\mathop{\rm with}} {\kern 1pt}{\kern 1pt} U_i \subseteq \Real^{m_i}, \sum_{i=1}^{N}m_i = m.
	\end{IEEEeqnarray}
	We define $Out = {\mathbb{R}^n} \setminus X $ and its projection on $\Sigma_i$ as $Out_i = {\mathbb{R}^{n_i}} \setminus X_i$. From the above structure, the state transition function of the interconnected system holds the following relations:\\
	For all $x = [x_1;\dots;x_N] \in X$, $u = [u_1;\dots;u_N] \in U$, $x' = [x'_1;\dots;x'_N] \in X$, $w = [w_1;\dots;w_N] \in W$,
		\begin{IEEEeqnarray}{c}\label{evo1}
		\begin{IEEEeqnarraybox}[\relax][l]{cc} 	
			x' = f(x,u,w) \Longleftrightarrow \forall i,j \in [1;N], i \neq j,\\
			x_{i}' = f_i(x_i,u_i,z_i,w_i), z_{ij} = h_{ji}(x_j),    
		\end{IEEEeqnarraybox}
	\end{IEEEeqnarray}	
	where $x'$ is the successor state from state $x$ under input $u$ and disturbance $w$. 
	
	For all $x = [x_1;\dots;x_N] \in X$, $u = [u_1;\dots;u_N] \in U$, $w = [w_1;\dots;w_N] \in W$,
		\begin{IEEEeqnarray}{l}\label{evo2}
		\begin{IEEEeqnarraybox}[\relax][l]{cc} 	
			&Out \cap \{f(x,u,w)\} \neq \emptyset \Longleftrightarrow \exists i \in [1;N],\forall j \in [1;N]\setminus i, \\
			&Out_i \cap \{f_i(x_i,u_i,z_i,w_i)\} \neq \emptyset, z_{ij} = h_{ji}(x_j).   
		\end{IEEEeqnarraybox}
	\end{IEEEeqnarray}	
	Now we provide the formal definition of safety controller.
	
	\begin{definition}
		\label{safetycontroller}
		A safety controller for system $\Sigma$ in \eqref{interConnectedsys} and safe set $X$ is a set-valued map $C: {\mathbb{R}^n} \rightrightarrows U$ such that:
		\begin{enumerate}
			\item $\forall x \in \Real^n$, $C(x) \subseteq U $;
			\item $dom(C) = \{ x\in {\mathbb{R}^n}|C(x) \neq \emptyset \} \subseteq X$;
			\item $\forall$ $x\in dom(C)$, $\forall u\in C(x)$, and $\forall w\in W$, $f(x,u,w) \in dom(C)$.
		\end{enumerate}	 
	\end{definition}
	\begin{remark}
		Note that the definition of a safety controller for subsystems $\Sigma_i$ is similar to that of Definition \ref{safetycontroller}, and the slight modification lies in condition 3) where the state transition function has to be modified to $f_i(x_i,u_i,z_i,w_i)$.
	\end{remark}
	
	It is known \cite{tabuada2009verification} that, there exists a maximal safety controller $C^*$ for control system $\Sigma$ and safe set $X$ containing all safety controllers, i.e., $C(x) \subseteq C^*(x)$ for all $x \in \Real^n$. This maximal safety controller can be computed theoretically using the well-known fixed-point algorithm \cite{kerrigan2001robust}.
	
	\section{Compositional Safety Controller Synthesis} \label{mainresult}
	In this section, we provide a method to compute compositionally safety controllers for interconnected system $\Sigma$ in Definition \ref{interconnectedsystem}. Suppose we are given safety controllers $C_i$ for all $i \in [1;N]$, each corresponding to subsystems $\Sigma_i$.  

	\subsection{Compositional Safety Controller Synthesis}
	Let controller $C: {\mathbb{R}^n} \rightrightarrows U$ be defined by $C(Out) = \emptyset$ and
	\begin{IEEEeqnarray}{c}\label{CompositionalController} 
		\begin{IEEEeqnarraybox}[\relax][l]{ll} 	
			&\forall i \in [1;N] {\kern 1pt}{\kern 1pt}{\kern 1pt}{\kern 1pt} {\mathop{\rm with}} {\kern 1pt}{\kern 1pt}{\kern 1pt}{\kern 1pt} x_i \in X_i,\\ \vspace{-1.5mm}
			&C(x) = \{u  \in U |u_i \in C_i(x_i), \forall i \in [1;N] \},    
		\end{IEEEeqnarraybox}
	\end{IEEEeqnarray}
	where $x = [x_1;\dots;x_N]$, $u = [u_1;\dots;u_N]$. 
	
	First, we show that the compositional controller as defined above works for the overall interconnected system. 
	\begin{theorem}
		\label{Theoremgeneral}
		Controller $C$ defined in \eqref{CompositionalController} is a safety controller for interconnected system $\Sigma$ and safe set $X$.
	\end{theorem}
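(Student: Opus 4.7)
The plan is to verify directly the three defining conditions of a safety controller from Definition \ref{safetycontroller}, using the construction of $C$ in \eqref{CompositionalController} and the assumed safety-controller property of each local $C_i$.

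First, conditions 1 and 2 are essentially bookkeeping. Condition 1 holds by construction since \eqref{CompositionalController} explicitly restricts $u$ to lie in $U$. For condition 2, if $C(x) \neq \emptyset$ then the clause $C(Out) = \emptyset$ forces $x \notin Out$; moreover any $u \in C(x)$ satisfies $u_i \in C_i(x_i)$ for each $i$, and nonemptiness of $C_i(x_i)$ requires $x_i \in dom(C_i) \subseteq X_i$ for every $i$, so $x \in \prod_{i=1}^N X_i = X$ by \eqref{eq:safeset}.

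The substantive work lies in condition 3. I fix $x = [x_1;\ldots;x_N] \in dom(C)$, $u = [u_1;\ldots;u_N] \in C(x)$, and $w = [w_1;\ldots;w_N] \in W$, and must show $x' = f(x,u,w) \in dom(C)$. Decomposing componentwise as in \eqref{evo1}, the successor satisfies $x_i' = f_i(x_i,u_i,z_i,w_i)$ with $z_{ij} = h_{ji}(x_j)$. Since $x_j \in X_j$ for each $j$ and the interconnection constraint $Y_{ji} \subseteq Z_{ij}$ from Definition \ref{interconnectedsystem} guarantees that $z_{ij}$ is an admissible internal input of $\Sigma_i$, the triple $(u_i,z_i,w_i)$ is an admissible action for subsystem $\Sigma_i$. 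Applying condition 3 of the subsystem safety controller $C_i$ (via the remark following Definition \ref{safetycontroller}) at $(x_i,u_i)$ under disturbance $w_i$ and internal input $z_i$ then yields $x_i' \in dom(C_i)$ for every $i$. Picking any $u_i' \in C_i(x_i')$ and concatenating produces $u' = [u_1';\ldots;u_N'] \in C(x')$, so $x' \in dom(C)$ as required.

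The main obstacle I anticipate is justifying that the realized internal signal $z_{ij} = h_{ji}(x_j)$, which the local controller $C_i$ did not choose but only tolerates, is in fact admissible for the local safety-controller guarantee whenever $x_j$ merely lies in $X_j$. This is exactly where the inclusion $Y_{ji} \subseteq Z_{ij}$ built into Definition \ref{interconnectedsystem} is indispensable; without it, the coordinate-wise decomposition \eqref{evo1} of $f$ would still hold, but propagation of local invariance to global invariance would break down.
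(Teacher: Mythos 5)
Your proposal is correct and follows essentially the same route as the paper's proof: verify conditions 1 and 2 directly from the construction, then establish condition 3 by decomposing the successor componentwise via \eqref{evo1}, invoking condition 3 of each local $C_i$ to get $x_i' \in dom(C_i)$, and concatenating local control values to conclude $x' \in dom(C)$. The only differences are cosmetic — the paper first shows $x' \in X$ by a contradiction argument via \eqref{evo2}, which your version renders unnecessary since $x_i' \in dom(C_i) \subseteq X_i$ for all $i$ already gives $x' \in X$, and you are somewhat more explicit than the paper about why the realized internal signal $z_{ij} = h_{ji}(x_j)$ is admissible for the local guarantee.
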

	\begin{proof}
		By \eqref{CompositionalController}, $\forall x \in X $, $C(x) \subseteq U$, and $C(Out) = \emptyset \subseteq U$. Therefore, $\forall x \in {\mathbb{R}^n}$, $C(x) \subseteq U$. It is clear to see that $dom(C) \subseteq X$ trivially follows by $C(Out) = \emptyset$. We continue by showing 3) in Definition \ref{safetycontroller}. Let $x \in dom(C) \subseteq  X$, $u \in C(x)$ and $x' = f(x,u,w)$. First, we show $x' \in X$ by contradiction. If $x' \notin X$, then $x' \in Out$ from \eqref{evo2}. There exists $i \in [1;N]$, such that $Out_i \cap \{f_i(x_i,u_i,z_i,w_i)\} \neq \emptyset$, which contradicts the fact that $u_i \in C_i(x_i)$ with $C_i$ being the safety controller for subsystem $\Sigma_i$ and the corresponding safe set $X_i$. Therefore, we have $x' \in X$. From \eqref{evo1}, it is clear that, $\forall i \in [1;N]$, $x'_i = f_i(x_i,u_i,z_i,w_i)$. Moreover, $u_i \in C_i(x_i)$ implies that $x'_i \in dom(C_i)$. For $i \in [1;N]$, let $u'_i \in C_i(x'_i)$ and by \eqref{CompositionalController}, we have $u' = (u'_1, u'_2,\dots,u'_N) \in C(x')$ and $x' \in dom(C)$. Hence, we conclude that $C$ is a safety controller for $\Sigma$ and $X$.
	\end{proof}
	In the next result, we show that the maximality of the compositional controller holds when safety controller of each subsystem is maximal.

	\begin{theorem}
		\label{maximality}
		For $i\in [1;N]$, let $C_i^*$ be the maximal safety controller for subsystem $\Sigma_i$, and safe set $X_i$. Then, controller $C^*$ defined by $C^*(Out) = \emptyset$:
		\begin{IEEEeqnarray}{c}\label{maximalcontroller}
			\begin{IEEEeqnarraybox}[\relax][l]{ll} 	 	
				&\forall i \in [1;N] {\kern 1pt}{\kern 1pt}{\kern 1pt}{\kern 1pt} {\mathop{\rm with}} {\kern 1pt}{\kern 1pt}{\kern 1pt}{\kern 1pt} x_i \in X_i,\\
				&C^*(x) = \{u \in U |u_i \in C_i^*(x_i), \forall i \in [1;N] \},    
			\end{IEEEeqnarraybox}
		\end{IEEEeqnarray}
		where $x \!=\! [x_1;\!\dots\!;x_N\!]$, $u\! =\! [u_1;\!\dots\!;u_N\!]$, is the maximal safety controller for the interconnected system $\Sigma$ and safe set $X$.
	\end{theorem}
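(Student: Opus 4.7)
The plan is to use Theorem~\ref{Theoremgeneral} for the safety part, and to handle maximality by projecting an arbitrary competing safety controller onto each subsystem. Applying Theorem~\ref{Theoremgeneral} with the choice $C_i = C_i^*$ for every $i \in [1;N]$ certifies that $C^*$ defined in~\eqref{maximalcontroller} is a safety controller for $\Sigma$ and $X$, so only maximality remains: one must show that every other safety controller $\widetilde{C}$ for $\Sigma$ and $X$ satisfies $\widetilde{C}(x) \subseteq C^*(x)$ for all $x \in \Real^n$.

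Given such a $\widetilde{C}$, for each $i \in [1;N]$ I would introduce the projected local controller
$$ \widetilde{C}_i(x_i) := \bigl\{ u_i \in U_i : \exists\, x \in dom(\widetilde{C}),\ \pi_i(x) = x_i,\ \exists\, u \in \widetilde{C}(x),\ \pi_i(u) = u_i \bigr\}, $$
and then verify that each $\widetilde{C}_i$ is a safety controller for $\Sigma_i$ and $X_i$. Conditions~1) and 2) of Definition~\ref{safetycontroller} are immediate from the factorizations $U = \prod_{i=1}^N U_i$ and $dom(\widetilde{C}) \subseteq X = \prod_{i=1}^N X_i$ combined with the definition of $\pi_i$.

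The main obstacle is condition~3). For any $x_i \in dom(\widetilde{C}_i)$, $u_i \in \widetilde{C}_i(x_i)$, $z_i \in Z_i$ and $w_i \in W_i$, I must produce a successor $f_i(x_i,u_i,z_i,w_i) \in dom(\widetilde{C}_i)$. The definition of $\widetilde{C}_i$ supplies a witness $x \in dom(\widetilde{C})$ and $u \in \widetilde{C}(x)$ with $\pi_i(x)=x_i$ and $\pi_i(u)=u_i$, and condition~3) applied to $\widetilde{C}$ together with the interconnection relation $z_{ij}=h_{ji}(x_j)$ yields the required inclusion for the \emph{particular} $z_i$ realized by this $x$. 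To extend this to every $z_i \in Z_i$ one has to vary the companion states $x_j \in X_j$, $j\neq i$, so that the induced outputs sweep all of $Y_{ji}$ and then invoke the interconnection constraint $Y_{ji} \subseteq Z_{ij}$; this coverage step, realizing arbitrary admissible internal inputs by suitable companion witnesses in $dom(\widetilde{C})$, is the delicate part of the argument. Once condition~3) is in hand, maximality of each $C_i^*$ gives $\widetilde{C}_i(x_i) \subseteq C_i^*(x_i)$, and for every $u = [u_1;\dots;u_N] \in \widetilde{C}(x)$ the componentwise inclusions $u_i \in \widetilde{C}_i(x_i) \subseteq C_i^*(x_i)$ place $u \in C^*(x)$ by~\eqref{maximalcontroller}, finishing the proof.
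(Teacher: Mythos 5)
Your overall architecture is the same as the paper's: it too establishes safety by reusing the argument of Theorem~\ref{Theoremgeneral}, and proves maximality by projecting an arbitrary competing safety controller $C'$ onto each subsystem via exactly your $\widetilde{C}_i$ (equation \eqref{proofcontroller} in the paper), arguing that each projection is a local safety controller, invoking maximality of $C_i^*$ componentwise, and reassembling. So the route is the intended one.

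However, the ``coverage step'' you flag in verifying condition~3) is a genuine gap, and you are right not to wave it away; notably, the paper's own proof silently elides it. After producing a witness $x \in dom(\widetilde{C})$ and $u \in \widetilde{C}(x)$ with $\pi_i(x)=x_i$ and $\pi_i(u)=u_i$, condition~3) for $\widetilde{C}$ together with \eqref{evo1} only yields $f_i(x_i,u_i,z_i,w_i) \in dom(\widetilde{C}_i)$ for the \emph{particular} internal input $z_{ij}=h_{ji}(\pi_j(x))$ induced by that witness's companion states. For $\widetilde{C}_i$ to be a safety controller for $\Sigma_i$ in the sense required for comparison with $C_i^*$ (and in the sense needed to make Theorem~\ref{Theoremgeneral} and the assume--guarantee remark work), invariance must hold for \emph{every} admissible $z_i$, i.e., at least for every $z_{ij} \in Y_{ji}=h_{ji}(X_j)$. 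There is no reason the witnesses available for a fixed pair $(x_i,u_i)$ sweep all of $Y_{ji}$: their $j$-th components range only over $\pi_j(dom(\widetilde{C}))$, and even within that set only over companion states that happen to pair with $x_i$ while admitting some $u$ with $\pi_i(u)=u_i$. This is not a notational quibble but the standard conservatism of decentralized invariance: a centralized controller may coordinate subsystems so that the internal inputs never realize their worst case, in which case $\widetilde{C}_i$ is \emph{not} robust to all $z_i$, the inclusion $\widetilde{C}_i(x_i) \subseteq C_i^*(x_i)$ is unjustified, and the final step collapses. Closing the gap requires an extra hypothesis (for instance, that every $z_{ij} \in Y_{ji}$ is realized by a suitable witness in $dom(\widetilde{C})$ for each $(x_i,u_i)$, or a redefinition of the local admissible internal-input sets relative to $dom(\widetilde{C})$); neither your proposal nor the paper supplies one, so as written both arguments are incomplete at the same point.
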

	\begin{proof}
		Let $C':  {\mathbb{R}^n} \rightrightarrows U $ be a safety controller for system $\Sigma$ and safe set $X$. For $i\in [1;N]$, let controllers  $C'_i: {\mathbb{R}^{ni}} \rightrightarrows U_i $ be defined by $C'_i(Out_i) = \emptyset $ and for all $x_i \in X_i = \pi_i(X)$,
		\begin{align}\label{proofcontroller} 
			C'_i(x_i)=\{u_i \in \pi_i(C'(x))|x \in X, x_i=\pi_i(x)\},  
		\end{align}
		where the projections for all $x \in X$ and $u \in U$ over $N$ components are $x_i=\pi_i(x) \in X_i$ and $u_i=\pi_i(x) \in U_i$.	
		We proceed with showing that $C'_i$ is a safety controller for system $\Sigma_i$ and safe set $X_i$.
		Since  $C'_i(Out_i) = \emptyset $ and $C'$ is a safety controller, for all  $x_i \in {\mathbb{R}^{n_i}}$, we readily have $C'_i(x_i)$ $\subseteq U_i$ from \eqref{proofcontroller}, so that condition 1) in Definition \ref{safetycontroller} is satisfied. For condition 2), we can trivially obtain dom($C'_i$) $ \subseteq X_i$ from $C'_i(Out_i) = \emptyset$. Now, let $x_i \in$ dom($C'_i$), $u_i \in C'_i(x_i)$, and $x_i' = f_i(x_i,u_i,z_i,w_i)$. We prove that $x_i' \in$ dom($C'_i$).  From \eqref{proofcontroller}, there exist $x \in$ dom($C'$) and $u\in C'(x)$ such that $\pi_i(x)=x_i $ and $\pi_i(x)=u_i$. As $C'$ is a safety controller, we have $f(x,u,w) \in$ dom($C'$) $\subseteq X$. Moreover, using the state transition function in \eqref{evo1}, there exists $x' = f(x,u,w)$ such that $\pi_i(x') = x_i'$. Then, since $x' \in$ dom ($C'$) and by \eqref{proofcontroller}, we get $x_i' \in$ dom($C'_i$), which satisfies condition 3) in Definition \ref{safetycontroller}. 	 	
		Therefore $C'_i$ is a safety controller for system $\Sigma_i$ and safe set $X_i$. Then, for all $x_i \in X_i$, $C'_i(x_i) \subseteq C_i^*(x_i)$ follows from the maximality of  $C_i^*$ as in Theorem \ref{maximality}. Finally, let us assume that $x \in$ dom($C'$) and $u \in C'(x)$. Then $u_i \in C'_i(x_i) \subseteq C_i^*(x_i)$, for all $i \in [1;N]$, clearly follows from \eqref{proofcontroller}. The definition of maximal safety controller in \eqref{maximalcontroller} verifies $u \in C^*(x)$, which shows the maximality of $C^*$ and completes the proof.  			
	\end{proof}
	\subsection{Safety Controller based on Robust Controlled Invariant Set}
	
	For the remainder of the paper, we make an assumption that the class of control subsystems in Definition \ref{eq:controlsystem} is linear, discrete-time and described by difference inclusion 	
	\begin{IEEEeqnarray}{c}	\hspace{-5mm}
		\label{eq:syslinear}
		\Sigma_i:\left\{ \begin{IEEEeqnarraybox}[\relax][c]{rCl}
			\xi_i(t+1) &\in& A_i\xi_i(t) + B_i\upsilon_i(t) + D_iz_i(t) + W_i,\\
			y_i(t)  &=& C_i(\xi_i(t)),
		\end{IEEEeqnarraybox}\right.
	\end{IEEEeqnarray}
	where $A_i \in \Real^{n_i\times n_i}$, $B_i \in \Real^{n_i \times m_i} $, $C_i \in \Real^{q_i \times n_i} $ and $D_i \in \Real^{n_i \times (n-n_i)}$ are constant matrices, $\xi_i : \Natural \rightarrow \Real^{n_i}$ and $y_i : \Natural \rightarrow \Real^{q_i}$ are called state trajectory and output trajectory, respectively, $\upsilon_i \in \mathcal{U}_i$ and $z_i \in \mathcal{Z}_i$ denote input trajectories. The state transition function is of the form: $f_i(x_i,u_i,z_i,w_i) = A_ix_i + B_iu_i + D_iz_i + w_i$, where $w_i \in W_i$.
	We write $\xi_{x_i\upsilon_i z_iw_i}(t)$ to denote the state value at time $t$ with initial state $\xi_i(0) = x_i$ under input trajectories $\upsilon_i$, $z_i$, and disturbance signal $w_i$. We denote by $y_{x_i\upsilon_i z_iw_i}$ the output trajectory corresponding to state trajectory $\xi_{x_i\upsilon_i z_iw_i}$.
	
	One can readily see that the safety controller in Definition \ref{safetycontroller} enforces every trajectory $\xi_{x_i\upsilon_i z_iw_i}$ of $\Sigma_i$ to evolve inside the safe set $X_i$. Note that the problem of computing this safety controller is equivalent to finding a robust controlled invariant set inside $X$, which is defined as the following.
	\begin{definition}
		A set $\Omega_i \subseteq \Real^{n_i} $ contained in $X_i \subseteq \Real^n $ is robust controlled invariant (RCI) w.r.t subsystem $\Sigma_i$ if
		\begin{align}
			\forall x_i \in \Omega_i, \exists u_i \in U_i, s.t. {\kern 1pt} {\kern 1pt} x_i' = f_i(x_i,u_i,z_i,w_i) \in \Omega_i.
		\end{align}
	\end{definition}
	Given a safe set $X$, it is known that there exists a maximal RCI set inside $X$ containing all RCI sets. It corresponds to maximal safety controller $C^*$ and is denoted by $ \Omega_\infty$. 
	However, the computation of $\Omega_\infty$, which requires implementing the well-known fixed-point algorithm, is still an open problem owing to termination and computational complexity issues. Hence, we leverage two algorithms proposed in \cite{rungger2017computing}, which are called outer and inner approximation of $\Omega_\infty$, to find RCI set for \eqref{eq:syslinear}. The main idea of the method is briefly explained here.
	
	The method is targeted at controllable linear systems of the form of $\xi(t) \in A\xi(t) + B\upsilon(t) + W$ with compact constraint sets $X$ and $U$. The computation of outer invariant approximation is based on set iteration \eqref{iter} and stopping criterion \eqref{stop} as follows
\vspace{-1.5mm}	
    \begin{IEEEeqnarray}{c}	\vspace{-0.5mm}
		\label{iter} 
		R_0 = X, R_{i+1} = pre(R_i) \cap X,	\\\vspace{-1.5mm}
		\label{stop}
		R_i \subseteq R_{i+n} + \varepsilon{\mathbb{B}}, 
	\end{IEEEeqnarray}
	where $pre(R) = \{x \in \Real^n|\exists_{u \in U}  Ax + Bu + W \subseteq R\}$ and $n$ is the dimension of the system. This method tolerates an arbitrarily small constraint violation. A $\delta-$relaxed RCI set $R$ can be provided (see \cite[Theorem 1]{rungger2017computing}), which satisfies $\Omega_\infty \subseteq R \subseteq X + \delta{\mathbb{B}}$, where $\delta = c\varepsilon$ with $c$ being a constant depending on the system dynamics. Constant $\delta$ can be interpreted as the relaxation of constraints. By choosing $\varepsilon$, $\delta$ arbitrarily small, set $R$ converges to the maximal RCI set $\Omega_\infty$.
	
	Due to the equivalence property of computing RCI set and safety controller, we provide a definition of $\delta-$relaxed safety controller w.r.t outer approximation of maximal RCI set.  
	
	\begin{definition} \label{outercontroller}
		A $\delta-$relaxed safety controller for system  $\Sigma$ and $\delta-$relaxed safe set $X+\delta {\mathbb{B}}$, based on outer approximation given by \eqref{iter} and \eqref{stop}, is a set-valued map $C^{\delta}: {\mathbb{R}^n} \rightrightarrows U+\delta {\mathbb{B}}$ such that:
		\begin{enumerate}
			\item $\forall x \in {\mathbb{R}^n} $, $C^{\delta}(x) \subseteq U+\delta {\mathbb{B}}$;
			\item $dom(C^{\delta}) = \{ x\in {\mathbb{R}^n}|C^{\delta}(x) \neq \emptyset \} \subseteq X+\delta {\mathbb{B}}$;
			\item $\forall x \in dom(C^{\delta})$ and $u\in C^{\delta}(x)$, $Ax + Bu + W \subseteq dom(C^{\delta})$.
		\end{enumerate}	
	\end{definition}
	For the inner invariant approximation, the set iteration and stopping criterion are modified to 
	\vspace{-1.5mm}	
	\begin{IEEEeqnarray}{c}	\label{iterInner} \vspace{-0.5mm}
		R_0 = X, R^{\rho}_{i+1} = pre_{\rho}(R^{\rho}_i) \cap X,	\\  \vspace{-1.5mm}
		\label{stopInner}
		R^{\rho}_i \subseteq R^{\rho}_{i+1} + \rho{\mathbb{B}}, 
	\end{IEEEeqnarray}
	where $pre_{\rho}(R^{\rho}_i) = \{x \in \Real^n|\exists_{u \in U}  Ax \!+\! Bu \!+\! W \!+\! \rho{\mathbb{B}} \subseteq R^{\rho}_i\}$. The RCI set is given by $R^{\rho}_{i+1}$ in \eqref{stopInner}, see \cite[Sec. III]{rungger2017computing}. 
	The corresponding inner safety controller is defined as follows.
	
	\begin{definition} \label{innercontroller}
		A $\rho-$inner safety controller for system  $\Sigma$ and safe set $X$, based on inner approximation given by \eqref{iterInner} and \eqref{stopInner}, is a set-valued map $C^{\rho}: {\mathbb{R}^n} \rightrightarrows U$ such that:
		\begin{enumerate}
			\item $\forall x \in {\mathbb{R}^n} $, $C^{\rho}(x) \subseteq U$;
			\item $dom(C^{\rho}) = \{ x\in {\mathbb{R}^n}|C^{\rho}(x) \neq \emptyset \} \subseteq X$;
			\item $\forall x \in dom(C^{\rho})$ and $u\in C^{\rho}(x)$, $Ax + Bu + W \subseteq dom(C^{\rho})$.
		\end{enumerate}	 
	\end{definition}

	\begin{remark}
		Note that the above method does not impose any restrictions on the shape of constraint sets or disturbance set, but simply consider compact sets which could be given by finite unions of polytopes. For the subsystems in \eqref{eq:syslinear}, there exist interconnected variables which are constrained by $z_{ij} = y_{ji}$. We basically follow an assume-guarantee reasoning \cite{henzinger1998you} to bound interconnected variables. We guarantee that the trajectory of each subsystem evolves inside its safe set under the assumption that the other $N-1$ subsystems do the same. In this view, the Minkowski sum of $D_iz_i(t) + W_i$ satisfies the compact requirement as well, so that one can take it as a disturbance set while computing the outer/inner approximation.  
	\end{remark}

	\subsection{Compositional Controller Synthesis based on Outer Approximation of Maximal Safety Controller}
	Here, we continue with compositional controller synthesis results based on separately computed $\delta-$relaxed safety controllers of subsystems.

	Let $C_i^{\delta_i}$: $\Real^{n_i} \rightrightarrows U_i +\delta_i {\mathbb{B}_i} $, $\forall i \in [1;N] $, be the $\delta_i-$relaxed safety controller for $\Sigma_i$ and the $\delta_i-$relaxed safe set $X_i+\delta_i{\mathbb{B}_i}$, where ${\mathbb{B}_i} \subseteq \Real^{n_i}$ is the closed unit ball in $\Real^{n_i}$.
	
	Then, let $C^{\delta}: {\mathbb{R}^n} \rightrightarrows U^{\delta} $ be defined by $C^{\delta}(Out^{\delta}) = \emptyset$ and 	
	\begin{IEEEeqnarray}{l}\label{outerApp}
		\begin{IEEEeqnarraybox}{l} \hspace{-5mm}
        \forall x \! \in \! X^{\delta},\! C^{\delta}\!(x) \!=\! \{u \in U^{\delta} | u_i \in C^{\delta_i}_i(x_i), \forall i \in [1;N]\},
		\end{IEEEeqnarraybox} 
	\end{IEEEeqnarray}
	where $x = [x_1;\dots;x_N]$, $u = [u_1;\dots;u_N]$, $x_i \in X_i + \delta_i{\mathbb{B}_i}$, $u_i \in U_i +\delta_i{\mathbb{B}_i}$, $X^{\delta} = \prod\limits_{i=1}^{N} (X_i + \delta_i{\mathbb{B}_i})$,  $U^{\delta} = \prod\limits_{i=1}^{N} (U_i + \delta_i{\mathbb{B}_i})$ and $Out^{\delta} = {\mathbb{R}^n} \setminus X^{\delta} $.
	
	In order to show the next result, we need the following technical lemma.
	\begin{lemma}
		\label{distributive}
		The Cartesian product of sets is distributive over Minkowski sum and Pontryagin difference.	 
	\end{lemma}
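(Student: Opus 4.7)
The plan is to verify the two claimed identities directly from the definitions, working with the two-factor case first and then extending to $N$ factors by a straightforward induction. Concretely, for sets $A_1, B_1 \subseteq \Real^{n_1}$ and $A_2, B_2 \subseteq \Real^{n_2}$ I will establish
\begin{IEEEeqnarray}{c}
(A_1 \times A_2) + (B_1 \times B_2) = (A_1 + B_1) \times (A_2 + B_2), \notag\\
(A_1 \times A_2) - (B_1 \times B_2) = (A_1 - B_1) \times (A_2 - B_2), \notag
\end{IEEEeqnarray}
and then iterate. The whole argument is elementwise, so the only real work is writing the quantifiers carefully.

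For the Minkowski sum identity I would pick a generic point $z = (z_1, z_2) \in \Real^{n_1} \times \Real^{n_2}$ and observe that $z \in (A_1 \times A_2) + (B_1 \times B_2)$ iff there exist $a = (a_1,a_2) \in A_1 \times A_2$ and $b = (b_1,b_2) \in B_1 \times B_2$ with $z = a+b$, which using the componentwise definition of addition in the product space is equivalent to $z_i = a_i + b_i$ with $a_i \in A_i$, $b_i \in B_i$ for $i \in \{1,2\}$; this is exactly the statement that $z_i \in A_i + B_i$, i.e. $z \in (A_1+B_1) \times (A_2+B_2)$. Both inclusions are visible from the same biconditional, so the equality follows.

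For the Pontryagin difference identity I would again argue by biconditional. By definition, $z = (z_1,z_2) \in (A_1 \times A_2) - (B_1 \times B_2)$ iff $z \in A_1 \times A_2$ and $z + (B_1 \times B_2) \subseteq A_1 \times A_2$. Rewriting the inclusion componentwise, this holds iff for all $b_1 \in B_1$ and all $b_2 \in B_2$ we have $z_1 + b_1 \in A_1$ and $z_2 + b_2 \in A_2$. Because the two conditions decouple across coordinates (no cross-constraint links $b_1$ to $b_2$), this is equivalent to the pair of independent statements $z_i + B_i \subseteq A_i$ with $z_i \in A_i$, i.e. $z_i \in A_i - B_i$, which is $z \in (A_1 - B_1) \times (A_2 - B_2)$. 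The key observation is this decoupling of universal quantifiers across the product structure, and it is also the only place where one might slip up, since a quantifier swap in the wrong direction would yield a strict inclusion rather than equality; this is therefore the step to state most carefully.

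Finally, to extend from two factors to $N$ factors I would apply induction on $N$, using the two-factor identity together with the associativity of Cartesian product, Minkowski sum, and Pontryagin difference (the latter applied on each coordinate block). No further calculation is required.
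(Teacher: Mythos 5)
Your proof is correct and is exactly the definition-chasing argument the paper has in mind: the authors omit the proof, stating only that it ``can be proved simply by following the definitions of Minkowski sum, Pontryagin difference and Cartesian product,'' which is precisely what you carry out. The only caveat worth recording is that the quantifier decoupling in the Pontryagin-difference step implicitly uses that each $B_i$ is nonempty (if, say, $B_2=\emptyset$ then $(A_1\times A_2)-(B_1\times B_2)=A_1\times A_2$ while $(A_1-B_1)\times(A_2-B_2)$ may be smaller), a degenerate case that never arises in the paper since the relevant sets are scaled unit balls.
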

		The lemma can be proved simply by following the definitions of Minkowski sum, Pontryagin difference and Cartesian product, so that is omitted here due to space limitation.

	
	The following result shows that the compositional controller in \eqref{outerApp} works for the overall interconnected system.
	\begin{theorem}
		The controller 	$C^{\delta}$ in \eqref{outerApp} is a $\delta-$relaxed safety controller w.r.t the interconnected system $\Sigma$, $\delta-$relaxed safe set $X+\delta{\mathbb{B}}$ and $U+ \delta{\mathbb{B}} $.
	\end{theorem}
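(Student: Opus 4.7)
The plan is to mirror the proof of Theorem \ref{Theoremgeneral}, but with the relaxations tracked carefully. The enabling fact is Lemma \ref{distributive}: with $\delta$ identified as the concatenation $[\delta_1;\dots;\delta_N]$, it gives $X^{\delta} = \prod_{i=1}^N (X_i + \delta_i \mathbb{B}_i) = X + \delta \mathbb{B}$ and $U^{\delta} = U + \delta \mathbb{B}$, so that the three requirements of Definition \ref{outercontroller} for the composed controller reduce to checking the analogous requirements componentwise against the local controllers $C_i^{\delta_i}$.

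Conditions 1) and 2) of Definition \ref{outercontroller} are essentially immediate. By construction \eqref{outerApp}, $C^{\delta}$ vanishes on $Out^{\delta}$, so $dom(C^{\delta}) \subseteq X^{\delta} = X + \delta \mathbb{B}$, yielding 2). For 1), any $u \in C^{\delta}(x)$ has $u_i \in C_i^{\delta_i}(x_i) \subseteq U_i + \delta_i \mathbb{B}_i$ for every $i$, so concatenation and Lemma \ref{distributive} give $u \in U^{\delta} = U + \delta \mathbb{B}$.

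The substantive step is condition 3): given $x \in dom(C^{\delta})$ and $u \in C^{\delta}(x)$, show $Ax + Bu + W \subseteq dom(C^{\delta})$. By \eqref{outerApp}, $x_i \in dom(C_i^{\delta_i})$ and $u_i \in C_i^{\delta_i}(x_i)$ for each $i$. Following the assume-guarantee viewpoint recalled in the preceding remark, each local controller was synthesized treating $D_i z_i + W_i$ as its effective disturbance set, with $z_i$ constrained to the compact output set guaranteed by the remaining subsystems. The local invariance property then yields $A_i x_i + B_i u_i + D_i z_i + W_i \subseteq dom(C_i^{\delta_i})$ for every admissible $z_i$. Because the interconnection forces $z_{ij} = y_{ji}$ with each $x_j$ staying in $X_j + \delta_j \mathbb{B}_j$, the actual values of $z_i$ along any trajectory of the composition lie in the bounded set used during the local synthesis. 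Hence every successor component $x_i'$ lies in $dom(C_i^{\delta_i})$, and a second application of Lemma \ref{distributive} gives $x' \in \prod_{i=1}^N dom(C_i^{\delta_i}) \subseteq dom(C^{\delta})$.

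The main obstacle I anticipate is reconciling the two notions of ``disturbance'': Definition \ref{outercontroller} is written with only an external set $W$, whereas the local controllers absorb $D_i z_i$ into their disturbance at the subsystem level. The argument must therefore verify that the bounds on $z_i$ assumed when synthesizing each $C_i^{\delta_i}$ are indeed respected along every trajectory of the composition, i.e., that the assume-guarantee loop closes without enlarging the tolerated disturbance. It must likewise confirm that the composed relaxation parameter equals the concatenated $\delta$ rather than a larger constant, which is precisely the content supplied by Lemma \ref{distributive}.
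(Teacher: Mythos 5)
Your proposal follows essentially the same route as the paper: Lemma \ref{distributive} is used to relate $X^{\delta}$ and $U^{\delta}$ to $X+\delta\mathbb{B}$ and $U+\delta\mathbb{B}$, and the three conditions of Definition \ref{outercontroller} are then checked by repeating the argument of Theorem \ref{Theoremgeneral} componentwise, which is exactly what the paper does (it states only the set inclusions and omits the rest). One small correction: the paper takes $\delta = \norm{[\delta_1;\dots;\delta_N]}$, a scalar equal to $\max_i \delta_i$ under the infinity norm, and obtains only the inclusion $X^{\delta}=\prod_{i=1}^{N}(X_i+\delta_i\mathbb{B}_i)\subseteq X+\delta\mathbb{B}$ rather than your claimed equality with $\delta$ ``the concatenation'' (which does not type-check against the scalar multiple $\delta\mathbb{B}$, and equality fails when the $\delta_i$ differ); since conditions 1) and 2) of Definition \ref{outercontroller} only require the inclusion, your argument is otherwise unaffected.
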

	\begin{proof}
		By Lemma \ref{distributive}, for the composed safe set $X^{\delta}$ and composed constrained input set $U^{\delta}$ in \eqref{outerApp}, we have $X^{\delta} \!=\!\prod\limits_{i=1}^{N}\! (X_i \!+\! \delta_i{\mathbb{B}_i}) \!\subseteq\! \prod\limits_{i=1}^{N}\!X_i \!+ \!\delta{\mathbb{B}} \!=\! X \!+\! \delta{\mathbb{B}}$, and $U^{\delta} \!=\!\prod\limits_{i=1}^{N}\! (U_i \!+\! \delta_i{\mathbb{B}_i}) \!\subseteq\! \prod\limits_{i=1}^{N}\!U_i \!+\! \delta{\mathbb{B}}\! =\!U \!+ \!\delta{\mathbb{B}}$, where $\delta \!=\! \norm{[\delta_1; \dots; \delta_N]}$.
		
		Rest of the proof follows the same structure as that in Theorem \ref{Theoremgeneral} and is omitted.
	\end{proof}

	\subsection{Compositional Controller Synthesis based on Inner Approximation of Maximal Safety Controller}
	In this subsection, the composed safety controller is synthesized based on inner approximation of maximal safety controller.
	
	Let $C_i^{\rho_i}$: $\Real^{n_i} \rightrightarrows U_i $, $\forall i \in [1;N] $, be the $\rho_i-$inner safety controller for $\Sigma_i$ and the safe set $X_i$.
	
	Then, let the controller $C^{\rho}: {\mathbb{R}^n} \rightrightarrows U$ be defined by $C^{\rho}(Out^{\rho}) = \emptyset$ and
		\begin{IEEEeqnarray}{l}\label{ComposedInner}
		\begin{IEEEeqnarraybox}{l} 	\hspace{-5mm}
			\forall x \!\in \!X, C^{\rho}(x) \!=\! \{u \! \in \! U |u_i \in C^{\rho_i}_i(x_i), \forall i \!\in [1;N] \}, 
		\end{IEEEeqnarraybox} 
	\end{IEEEeqnarray}
	where $x \!=\! [x_1;\dots;x_N]$, $u \!=\! [u_1;\dots;u_N]$, $x_i \in X_i$, $u_i \in U_i$.
	
	\begin{proposition}
		The set-valued map $C^{\rho}$ is a $\rho-$inner safety controller w.r.t the interconnected system $\Sigma$, safe set $X$ and $U$, where the parameter $\rho = \norm{[\rho_1; \dots; \rho_N]}$.	
	\end{proposition}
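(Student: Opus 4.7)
My plan is to mirror the argument used for Theorem \ref{Theoremgeneral} and its outer-approximation analog, adapted to the inner safety controller definition (Definition \ref{innercontroller}). The three conditions of Definition \ref{innercontroller} must be verified for $C^{\rho}$ with respect to the interconnected system $\Sigma$, safe set $X$, and input set $U$.

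First I would dispatch conditions 1 and 2 quickly. Condition 1 follows immediately from \eqref{ComposedInner}: for $x\in X$, each $u_i\in C_i^{\rho_i}(x_i)\subseteq U_i$, so $u=[u_1;\dots;u_N]\in\prod_{i=1}^{N}U_i = U$ by Lemma \ref{distributive}, and $C^{\rho}(Out^{\rho})=\emptyset\subseteq U$ handles the complement. Condition 2 is immediate from $C^{\rho}(Out^{\rho})=\emptyset$ together with $X=\prod_{i=1}^{N}X_i$ so that $dom(C^{\rho})\subseteq X$.

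The main step is condition 3. Let $x\in dom(C^{\rho})$ and $u\in C^{\rho}(x)$. By \eqref{ComposedInner}, for every $i\in[1;N]$ we have $x_i\in dom(C_i^{\rho_i})$ and $u_i\in C_i^{\rho_i}(x_i)$. Invoking the inner-controller property for each subsystem—and following the assume-guarantee reading described in the remark after Definition \ref{innercontroller}, which lumps $D_iz_i(t)+W_i$ into a single compact disturbance set since each $z_{ij}=h_{ji}(x_j)$ ranges over a set bounded by the local safety guarantee of $\Sigma_j$—condition 3 applied to $C_i^{\rho_i}$ yields
\begin{equation}
A_i x_i + B_i u_i + D_i z_i + W_i \;\subseteq\; dom(C_i^{\rho_i}),
\end{equation}
for the admissible $z_i$. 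Using \eqref{evo1}, for every $w\in W=\prod_{i=1}^{N}W_i$ the successor state satisfies $\pi_i(f(x,u,w)) = A_ix_i+B_iu_i+D_iz_i+w_i \in dom(C_i^{\rho_i})\subseteq X_i$ for all $i$, hence $f(x,u,w)\in\prod_{i=1}^{N}dom(C_i^{\rho_i})$. For each $i$, pick $u_i'\in C_i^{\rho_i}(\pi_i(f(x,u,w)))$; then $u'=[u_1';\dots;u_N']$ witnesses $f(x,u,w)\in dom(C^{\rho})$ via \eqref{ComposedInner}. Since $w$ was arbitrary, $Ax+Bu+W\subseteq dom(C^{\rho})$.

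Finally, the aggregate parameter is justified exactly as in the proof of the outer case via Lemma \ref{distributive}: with the infinity norm, $\rho\mathbb{B}=\prod_{i=1}^{N}\rho\mathbb{B}_i$ and choosing $\rho=\norm{[\rho_1;\dots;\rho_N]}=\max_i\rho_i$ guarantees $\rho_i\mathbb{B}_i\subseteq\rho\mathbb{B}$ for every $i$, so the local robustness margins fit inside one common composed margin. The only subtlety I expect is making the assume-guarantee bounding of $D_iz_i$ fully rigorous—i.e., verifying that the compact bound on $z_i$ used when synthesizing $C_i^{\rho_i}$ is indeed respected by the other subsystems' trajectories under $C^{\rho}$, which is precisely the invariant property condition 3 establishes inductively. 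Everything else is structural and follows the pattern already established in Theorem \ref{Theoremgeneral}.
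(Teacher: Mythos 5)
Your proof is correct and follows essentially the same route as the paper: the paper's own proof is a one-line observation that $C^{\rho}$ in \eqref{ComposedInner} has the same structure as the general compositional controller of \eqref{CompositionalController}, so the argument of Theorem \ref{Theoremgeneral} carries over verbatim, which is exactly the argument you spell out in detail (including the correct handling of the interconnection terms $D_i z_i$ via the assume--guarantee reading and the choice $\rho=\norm{[\rho_1;\dots;\rho_N]}$ via Lemma \ref{distributive}).
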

	\begin{proof}
		From \eqref{ComposedInner}, it is clear that the inner approximated controller $C^{\rho}$ follows the structure of the general safety controller in \eqref{CompositionalController}, which completes the proof.
	\end{proof}
	The next proposition shows that the controller $C^{\rho}$, with the parameters $\rho_i \in \Real_{\geq 0}, \forall i \!\in [1;N]$ suitably chosen as described in the proof, contains all the safety controllers for the interconnected system, w.r.t the deflated constraint sets $\bar{X}_{\epsilon}$ and $\bar{U}_{\epsilon}$ defined as
     \begin{IEEEeqnarray}{l} \notag \vspace{-0.5mm}
		\bar{X}_{\epsilon} =\{x \in X |x+\epsilon{\mathbb{B}} \subseteq X\},\\\notag
		\bar{U}_{\epsilon} =\{u \in U |u+\epsilon{\mathbb{B}} \subseteq U\},
	\end{IEEEeqnarray}
	where $\epsilon \in \Real_{>0}$.
	
	\begin{proposition}
		The compositional inner approximated safety controller $C^{\rho}$ for the interconnected system $\Sigma$ w.r.t $X$ and $U$ is larger than any safety controller w.r.t $\bar{X}_{\epsilon}$ and $\bar{U}_{\epsilon}$.
	\end{proposition}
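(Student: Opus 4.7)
The plan is to mirror the structure of the proof of Theorem \ref{maximality}, but with the constraint sets and disturbance margins appropriately matched. First I would fix an arbitrary safety controller $C'\colon \Real^n \rightrightarrows \bar{U}_{\epsilon}$ for the interconnected system $\Sigma$ with respect to $\bar{X}_{\epsilon}$ and $\bar{U}_{\epsilon}$, and define its local projections $C'_i$ exactly as in \eqref{proofcontroller} (with $X, U$ replaced by $\bar{X}_{\epsilon}, \bar{U}_{\epsilon}$). The componentwise verification of the three items of Definition \ref{safetycontroller} via the evolution relations \eqref{evo1}--\eqref{evo2} transports verbatim to show that each $C'_i$ is a safety controller for $\Sigma_i$ with respect to the projected deflated sets $\bar{X}_{i,\epsilon} := \pi_i(\bar{X}_{\epsilon})$ and $\bar{U}_{i,\epsilon} := \pi_i(\bar{U}_{\epsilon})$. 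By Lemma \ref{distributive}, these local deflated sets satisfy $\bar{X}_{i,\epsilon} \subseteq X_i - \epsilon\mathbb{B}_i$ and $\bar{U}_{i,\epsilon} \subseteq U_i - \epsilon\mathbb{B}_i$.

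Next, I would pick the parameters so that $\rho = \norm{[\rho_1;\dots;\rho_N]} \leq \epsilon$, which in particular forces $\rho_i \leq \epsilon$ for every $i \in [1;N]$. The crucial local step is the inclusion $C'_i(x_i) \subseteq C^{\rho_i}_i(x_i)$ for every $x_i \in \Real^{n_i}$, which I would obtain by appealing to the completeness property of the inner-approximation scheme of \cite[Sec.~III]{rungger2017computing}. That scheme constructs $C^{\rho_i}_i$ by substituting the inflated disturbance set $W_i + \rho_i\mathbb{B}_i$ into the fixed-point iteration \eqref{iterInner}--\eqref{stopInner}, so $C^{\rho_i}_i$ contains every safety controller of $\Sigma_i$ that works for constraints deflated by at least $\rho_i$ under the original disturbance $W_i$. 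Since $\rho_i \leq \epsilon$ and $C'_i$ is a safety controller for the $\epsilon$-deflated constraints, the desired local inclusion follows.

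The conclusion is then immediate: for any $x \in$ dom$(C')$ and any $u = [u_1;\dots;u_N] \in C'(x)$, the projection identity yields $u_i \in C'_i(x_i) \subseteq C^{\rho_i}_i(x_i)$ for every $i$, so by the definition \eqref{ComposedInner} of $C^{\rho}$ we obtain $u \in C^{\rho}(x)$, i.e., $C'(x) \subseteq C^{\rho}(x)$, as required. The principal obstacle I anticipate is cleanly justifying the local inclusion $C'_i \subseteq C^{\rho_i}_i$: this is a deflation-to-inflation duality for the inner-approximation fixed point, and it requires care in handling the interconnection signals $z_{ij}$, which must be absorbed into each subsystem's effective disturbance via the assume-guarantee reasoning already discussed. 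Once that duality is stated precisely (either extracted from \cite{rungger2017computing} or verified by an additional induction on the set iteration \eqref{iterInner}--\eqref{stopInner}), the remainder of the proof is just the projection-and-compose machinery developed for Theorem \ref{maximality}.
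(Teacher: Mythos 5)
Your proposal is correct in substance and rests on the same external pillar as the paper's proof, namely the completeness result \cite[Theorem 3]{rungger2017computing}, which guarantees the existence of $\rho_i$ such that every RCI set inside the $\epsilon$-deflated local constraints is contained in $R^{\rho_i}_i$; this is exactly the ``deflation-to-inflation duality'' you flag as the principal obstacle, and the paper resolves it by citation rather than by re-deriving it. The structural difference is the direction of the reduction. The paper starts from local RCI sets $\bar{R}_{\epsilon i}$ and their controllers $\bar{C}_{\epsilon i}$, composes them upward (following Theorem \ref{Theoremgeneral}), and concludes that any \emph{compositional} controller $\bar{C}_{\epsilon}$ is contained in $C^{\rho}$; as written, it never addresses a safety controller for $\Sigma$ that is not of product form. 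You instead take an arbitrary safety controller $C'$ for $\Sigma$ w.r.t.\ $\bar{X}_{\epsilon}$, $\bar{U}_{\epsilon}$ and project it down to local controllers via \eqref{proofcontroller} (following Theorem \ref{maximality}), then apply the local completeness and recompose. Your route is the one that actually delivers the proposition as stated (``larger than \emph{any} safety controller''), so it buys generality at the cost of repeating the projection bookkeeping. One caveat: your assertion that it suffices to take $\rho_i \leq \epsilon$ is not justified and likely not what \cite[Theorem 3]{rungger2017computing} provides --- the admissible $\rho_i$ there depends on $\epsilon$ through system-dependent constants, and the paper correspondingly only asserts existence of suitable $\rho_i$; you should phrase the parameter choice the same way rather than positing a simple inequality.
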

	\begin{proof}
		As showed in \cite[Theorem 3]{rungger2017computing}, for each subsystem $\Sigma_i$, there exists $\rho_i \in \Real_{\geq 0}$ such that for any RCI set $\bar{R}_{\epsilon i} \subseteq \bar{X}_{\epsilon i} $ which satisfies
		\begin{align}
			x \in \bar{R}_{\epsilon i} \Longrightarrow \exists \bar{U}_{\epsilon i}: Ax + Bu + W \subseteq \bar{R}_{\epsilon i}, 
		\end{align}
		we have $\bar{R}_{\epsilon i} \subseteq R^{\rho_i}_i$. Suppose for each subsystem $\Sigma_i$, we can compute an RCI set $\bar{R}_{\epsilon i}$, which implies we have a safety controller $\bar{C}_{\epsilon i}$ w.r.t $\bar{X}_{\epsilon i}$ and $\bar{U}_{\epsilon i}$. Following the same proof idea as in Theorem \ref{Theoremgeneral}, it is readily to see that the compositional controller $\bar{C}_{\epsilon}: \Real^n \rightrightarrows \bar{U}_{\epsilon}$ defined by $\bar{C}_{\epsilon}(Out) = \emptyset$ and
		\begin{align}
			\forall x \in \bar{X}_{\epsilon}, \bar{C}_{\epsilon}(x) = \{u  \in \bar{U}_{\epsilon} |u_i \in \bar{C}_{\epsilon i}(x_i), \forall i \in [1;N] \},    
		\end{align}
		where $x = [x_1;\dots; x_N]$, $u = [u_1;\dots; u_N]$, $x_i \in \bar{X}_{\epsilon i}$, $u_i \in \bar{U}_{\epsilon i}$, is a safety controller w.r.t $\bar{X}_{\epsilon}$ and $\bar{U}_{\epsilon}$. Since $\bar{R}_{\epsilon i} \subseteq R^{\rho_i}_i$, $\forall i = [1;N]$, it follows from \cite[Theorem 3]{rungger2017computing} that any compositional safety controller $\bar{C}_{\epsilon}$ w.r.t $\bar{X}_{\epsilon}$ and $\bar{U}_{\epsilon}$ is contained in $C^{\rho}$.
	\end{proof}
	
	\section{Example} \label{example}
	We provide two case studies to illustrate our results. First, we implement the $\delta$-relaxed and  $\rho$-inner safety controllers in a platoon model to show their performance. 
	
	\begin{figure}
		\vspace{0.1cm}
		\centering
		\includegraphics[scale=0.45]{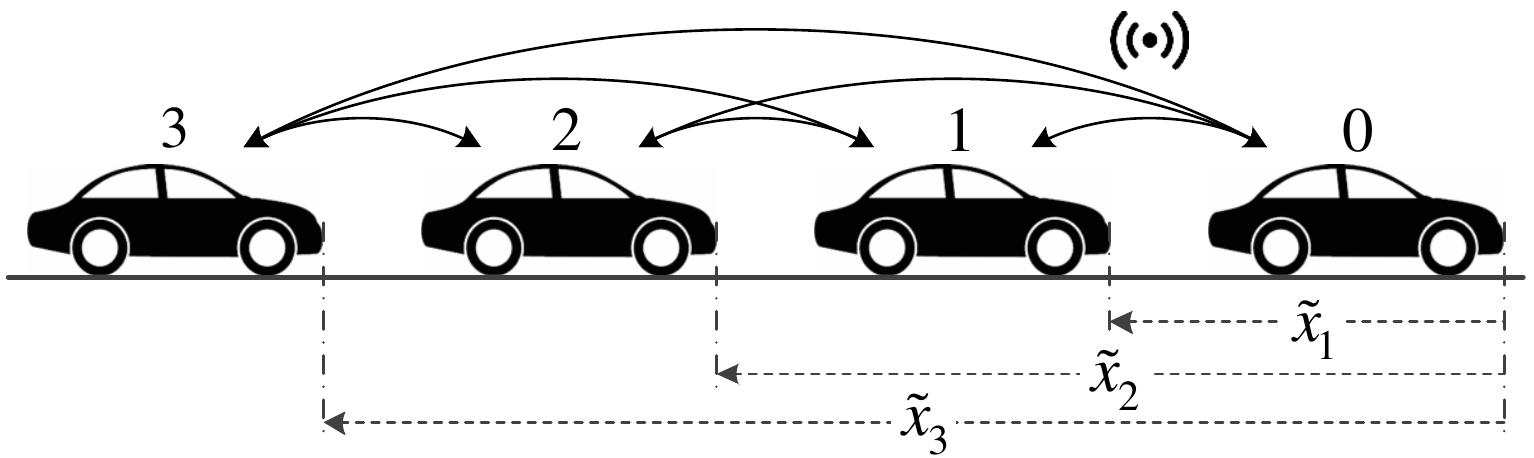}
		\caption{A platoon of 4 vehivles.}
		\label{fig:fig0}
		\vspace{-0.5cm}
	\end{figure}

	\subsection{Safety Controller for Centralized Vehicular Platoon}
	Consider a vehicular platoon example taken from \cite{sadraddini2017provably}. The platoon system consists of $N+1$ autonomous vehicles with 1 leader and $N$ followers moving on a single-lane road (see Fig. \ref{fig:fig0}). The dynamics is built in a relative manner with respect to the leader as follows
		\begin{IEEEeqnarray}{c}	
	\label{platoonmodel1}
	\begin{IEEEeqnarraybox}[\relax][c]{l} 
\tilde{x}_i(t+1) = \tilde{x}_i(t) + \tilde{v}_i(t) \Delta{\tau} + \tilde{u_i}(t)\Delta{\tau}^2/2 + w_{i,x}(t),\\
\tilde{v}_i(t+1) = \tilde{v}_i(t)  + \tilde{u_i}(t)\Delta{\tau} + w_{i,v}(t),\\
v_0(t+1) = v_0(t) + u_0(t) \Delta{\tau} + w_{0,v}(t),
	\end{IEEEeqnarraybox} 	
	\end{IEEEeqnarray}
	
	where $\tilde{x}_i(t) = x_0(t) - x_i(t)$, $\tilde{v}_i(t) = v_0(t) - v_i(t)$, $\tilde{u_i}(t) = u_0(t)-u_i(t) $, $i \in [1;N]$ denote the relative distance, relative velocity and relative input, respectively, with respect to the leader (with subscript 0). $w_{i,x}$ and $w_{i,v}$, $i \in [0;N]$ are the disturbances affecting position and velocity. The platoon state is defined as the vector $y:= (\tilde{x}_1,\tilde{v}_1,\tilde{x}_2,\tilde{v}_2,\dots,\tilde{x}_N,\tilde{v}_N,v_0)$, $y \in \Real^{2N+1}$. The evolution of state is therefore given by $y(t+1) = Ay(t) +Bu(t)+Ew(t)$, with $A$, $B$ and $E$ being constant matrices derived from dynamics in \eqref{platoonmodel1}.
	The specification that the system should satisfy includes: 1). Collision avoidance: $\tilde{x}_i > \tilde{x}_{i-1} + l_{i-1}$, $\tilde{x}_0 = 0$, where $l_i$ denotes the length of the $i$-th vehicle; 2). Constraint on the platoon length: $\tilde{x}_N \leq L$; 3). The platoon velocity is bounded: $v_0(t) \in [v_{0,\mathrm{min}}, v_{0,\mathrm{max}}]$ for all times and all admissible disturbances.
	
	Let $N=2$, $l_i=4.5$m, $\Delta{\tau}=0.5$s, $L = 10$m, $v_{0,\mathrm{min}}=13$m/s, and $v_{0,\mathrm{max}}=17$m/s. The constraints imposed on the input and disturbance are $U = \prod_{i=0}^{N}[-3,3]$m/s$^2$, and $W = \lambda^* \times \prod_{i=0}^{N}[-1,1]$m/s $\times [-0.25,0.25]$m, where the parameter $\lambda^*$ is a scalar. We compute the outer and inner approximation of the maximal RCI set. The corresponding parameters are set to $\varepsilon = \rho = 0.01$. By choosing the largest value of $\lambda^* = 0.23$, we were still able to compute the inner and outer approximation of the maximal RCI set. The projection of safe set $S$ onto the $\tilde{x}_1$-$\tilde{x}_2$ space can be given by the triangle:
	$\{(\tilde{x}_1,\tilde{x}_2)|\tilde{x}_1 + 4.5 \leq \tilde{x}_2, \tilde{x}_2 \leq 10, \tilde{x}_1 \geq 4.5\}$.
	In this case, we first used the multi-parametric toolbox \cite{MPT3} and computed the maximal RCI set. The projections of outer and inner approximation RCI sets onto the $\tilde{x}_1$-$\tilde{x}_2$ space with respect to that of the maximal one are illustrated in Fig. \ref{fig:fig1}. In comparison with the RCI set computed in \cite[Fig.2]{sadraddini2017provably}, the ones we obtained here are much less conservative. However, this centralized framework requires full state knowledge so that the computation becomes costly as $N$ increases.
	
	\begin{figure}
		\centering
		\includegraphics[scale=0.55]{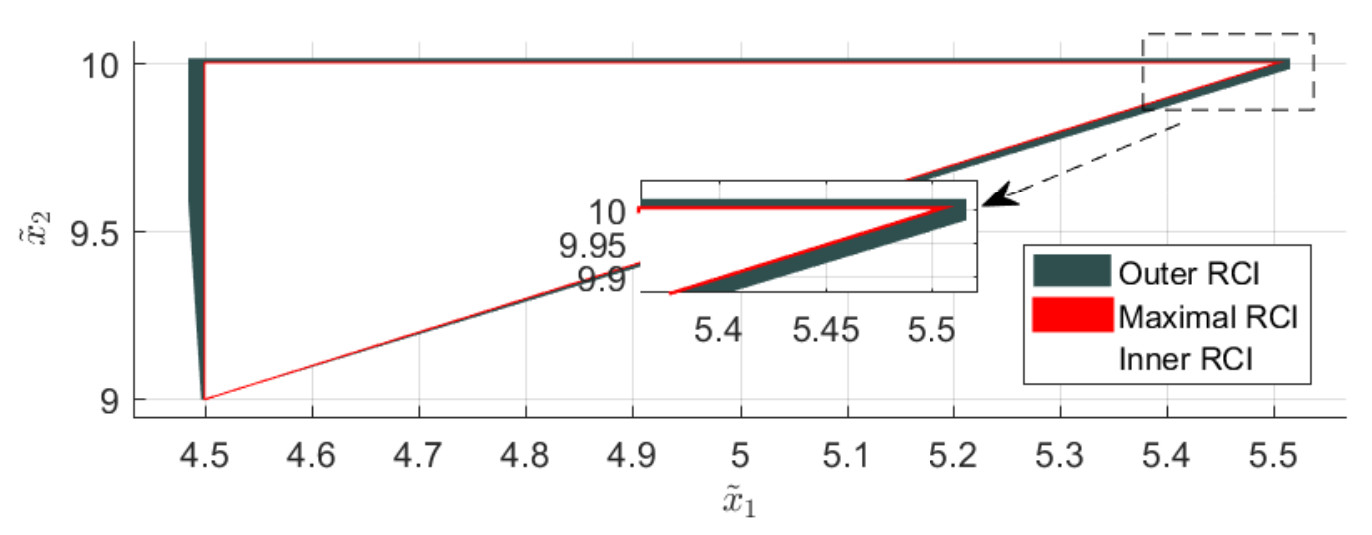}
		\caption{Projection of the outer and inner approximation of the maximal RCI set.}
		\label{fig:fig1}
		\vspace{-0.75cm}
	\end{figure}
	
	\subsection{Compositional Controller for Interconnected Platoon}
	In this example, we apply our main results to an interconnected platoon system with $N$ followers and 1 leader (see Fig. \ref{fig:fig3:b} top), as adopted from \cite{sadraddini2018distributed}. For the $i$-th follower, the state variable is defined as $x_i \!=\! (d_i,v_i)$, $i \in [1;N]$, with $d_i$ denoting the relative distance between itself (the $i$-th follower) and the preceding vehicle (the ($i-1$)-th vehicle, the 0-th vehicle represents the leader), $v_i$ is its velocity in the leader's frame. The evolution of states is given by
\begin{IEEEeqnarray}{c} \notag
x_i(t\!+\!1\!)\! =\! \begin{bmatrix}
1&-1\\
0& 1
\end{bmatrix}
\!x_i(t) \!+\!\begin{bmatrix}
0\\1
\end{bmatrix}\!u_i(t)\! +\!
\begin{bmatrix}
0 & \epsilon\\
0 & 0
\end{bmatrix} \!x_{i-1}(t)\! +\! W_i,\\ \notag
y_i = [y_{i1},\dots,y_{i,i},y_{i,i+1},\dots,y_{iN}], 
\end{IEEEeqnarray}
	where $y_{i,i} = y_{i,i+1} = x_i; y_{i,j} = 0, \forall j \neq i,i+1$, $x_{i-1}$ represents the state variable of the preceding vehicle that is acting as the interconnection here, the parameter $\epsilon$ represents the interconnection degree, $u_i(t) \in [-1,1]$ is the bounded control input, whereas $W_i$ is a polytopic disturbance set given by $\lambda^* \times [-0.1,0.1]$m $\times [-2,2]$m/s. The length of the vehicles is set to be $l_i=5$m, $\forall i \in [1;N]$.
	The overall control objective is to avoid collisions: $d_i(t) \geq 0$, $\forall i \in [1;N]$, $\forall t \in \Natural$ and in the meanwhile the length of the platoon is always bounded: $\sum_{i=1}^{N}d_i(t) + Nl_i \leq L$. Here, we decompose and under approximate the overall specification so that each vehicle is constrained by its own safe set:  $0.1 \leq d_i(t) \leq \Delta$, $\forall i \in [1;N]$, where $\Delta = (L-Nl_i)/N$. 
	
	Let $N = 6$, $\epsilon = 0.1$ and $\Delta =0.5$m. We found $\lambda^* = 0.06$ to be the largest parameter for the disturbance set while still finding a safety controller. We simulated the system for 60 seconds. The initial states were set at the centers of the safe sets. The disturbances were randomly generated at each second. The control inputs were computed using a quadratic program problem: $u_i(x_i) =$ argmin $\norm{u_i}_2$ such that $A_ix_i + B_iu_i + D_iz_i + W_i \subseteq \Omega$, where $\Omega$ denotes the outer/inner approximation of the maximal RCI set. Simulation results show that both the outer and inner compositional controller successfully enforce the interconnected system to satisfy the properties. The results of the outer approximated controller are shown in Fig. \ref{fig:fig3}. As depicted in Fig. \ref{fig:fig3:a}, the distances are always greater than $0.1$m, which indicates that the collision avoidance is ensured. The vehicular displacements are shown in Fig. \ref{fig:fig3:b}, where the horizontal axis represents relative distances of follower vehicles with respect to the leader. The leader is fixed in this frame depicted on the right by the black rectangle, and the followers depicted by gray rectangles move under disturbances.

	\begin{figure*}[!ht]
		
		\hspace{0.1cm}	
		\centering
		\subfigure[Trajectories of inputs and distances.]{
			\begin{minipage}[b]{0.3\textwidth}
				\label{fig:fig3:a}
				\includegraphics[scale=0.56]{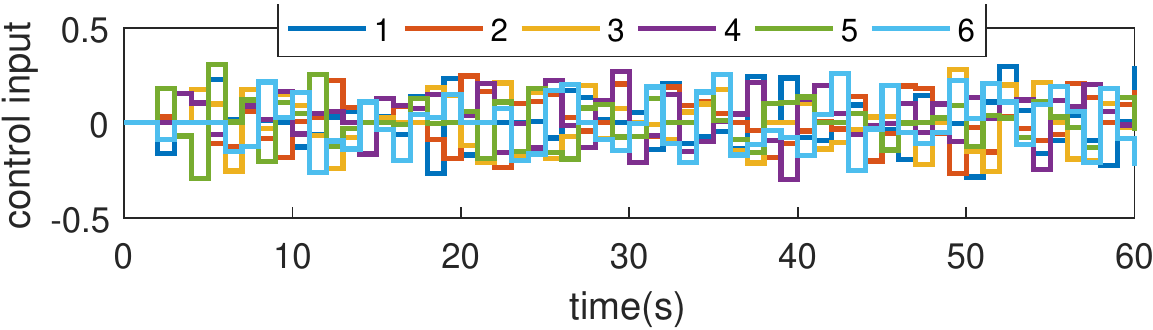}	\\						\includegraphics[scale=0.56]{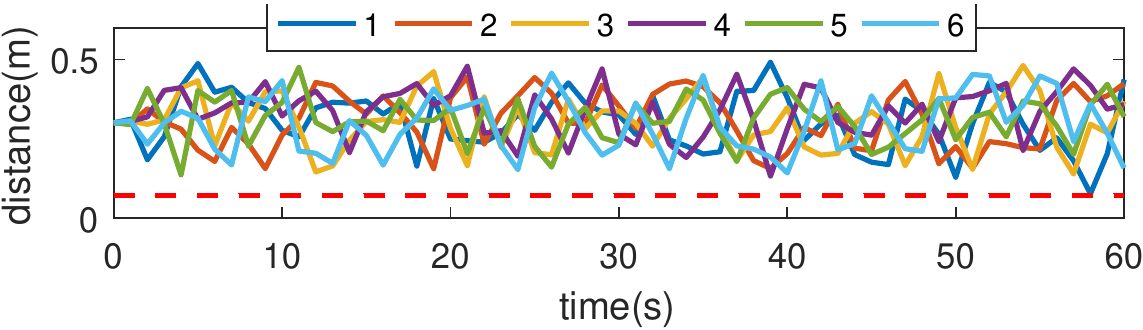}
			\end{minipage}
		}
		\subfigure[Platoon model and vehicular displacements in the leader frame.]{
			\begin{minipage}[b]{0.65\textwidth}
				\label{fig:fig3:b}
				\centering 
				\raggedleft
				\includegraphics[scale=0.36]{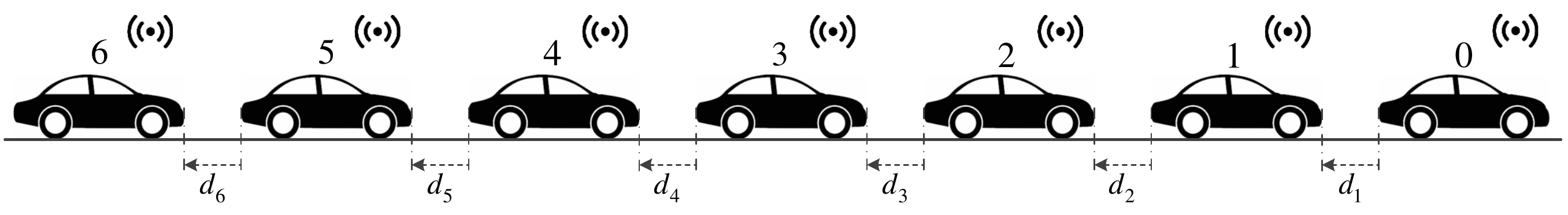}
				\label{fig:fig3:rightup}
				\vspace{0.5cm}
				\\
				\raggedleft
				\includegraphics[scale=0.63]{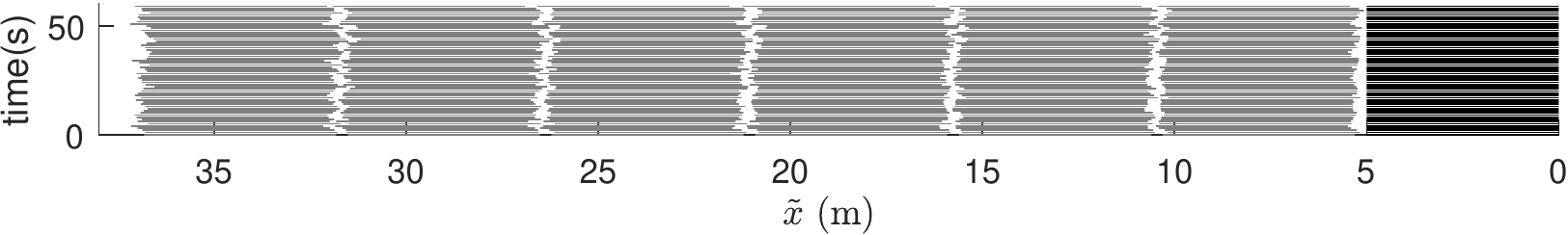}
				\label{fig:fig3:rightdown}				
			\end{minipage}
		}
		
		\small
		\caption{Trajectories of the Decentralized Platoon Model under Outer Safety Controller.}
		\label{fig:fig3}
		\normalsize
		\vspace{-0.5cm}
	\end{figure*}
Now, we analyze the largest disturbance and interconnection degree that the platoon can accommodate using outer and inner compositional controllers. The accuracy parameters of the outer and inner safety controllers are set to $\varepsilon = \rho = 0.01$. To obtain the largest disturbance magnitude, we set $\epsilon = 0.1$ and analyze how $\lambda^*$ varies with the density $\varrho = N/L$. As for the largest interconnection degree $\epsilon^*$, we set $\lambda = 0$ and see its variation with the density.
The results are shown in Table. \ref{table1}. It can be seen that $\lambda^*$ grows as vehicular density $\varrho$ decreases. Therefore, wider inter-vehicular spacings are recommended so as to adapt the system to larger disturbances. It is also observed that, the outer approximated controller can accommodate a slightly larger magnitude of disturbances than the inner approximated controller. However, the inner approximated controller slightly outperforms the outer one in terms of tolerating stronger interconnection as vehicular density goes down.
Note that the dimension of each subsystem is 2, and the control policies of this decentralized framework have $\mathcal{O}$(1) complexity. The computation time of safety controllers at each step is less than 0.01s, which is almost negligible. All the computations were conducted using MATLAB on a computer with Intel Core i7 3.4 GHz CPU.
	
	\begin{table}[t]
		\caption{Variation of $\lambda^*$ and $\epsilon^*$ with Density}
		\label{table1}
		\centering
		\begin{center}
			\begin{tabular}{|c|c||c|c|c|c|c|c|c|c|}
				\hline
				\multicolumn{2}{ |c|| } {$\Delta$} & 0.5  & 1.5 & 2 & 3 & 5 & 8 & 12 \\
				\hline
				\multicolumn{2}{ |c|| } {$\varrho$ (veh/km)} & 182 & 154 & 143 & 125 & 100 & 77 & 59 \\
				\hline
				\multirow{2}{*}{$\lambda^*$} & Outer & 0.06 & 0.18 & 0.21 & 0.26 & 0.32 & 0.37 & 0.41 \\  
				\cline{2-9}
				& Inner & 0.04 & 0.17 & 0.20 & 0.25 & 0.31 & 0.36 & 0.39 \\ 
				\hline
				\multirow{2}{*}{$\epsilon^*$} & Outer & 0.30 & 0.39 & 0.47 & 0.49 & 0.58 & 0.63 & 0.68 \\  
				\cline{2-9}
				& Inner & 0.28 & 0.39 & 0.46 & 0.49 & 0.59 & 0.64 & 0.69 \\ 
				\hline
			\end{tabular}
		\end{center}
		\vspace{-0.5cm}
	\end{table}

	\bibliographystyle{ieeetran}
	\bibliography{bib}
\end{document}